\newtheorem{remark}{Remark}
\newtheorem{definition}{Definition}
\newtheorem{proposition}{Proposition}
\newtheorem{theorem}{Theorem}
\newtheorem{problem}{Problem}
\newtheorem{assumption}{Assumption}
\title{\LARGE \bf
Consensus Control for Leader-follower Multi-agent Systems under Prescribed Performance Guarantees 
}
\author{Fei Chen and Dimos V. Dimarogonas% <-this % stops a space
\thanks{This work was supported by the EU H2020 Co4Robots
Project, the Swedish Research Council (VR) and the Knut och Alice Wallenberg Foundation (KAW).}% <-this % stops a space
\thanks{Fei Chen and Dimos V. Dimarogonas are with the Division of Decision and Control Systems, KTH Royal Institute of Technology, SE-100 44 Stockholm, Sweden
        {\tt\small \{fchen,dimos\}@kth.se}}%
% \thanks{P. Misra is with the Department of Electrical Engineering, Wright State University,
%         Dayton, OH 45435, USA
%         {\tt\small pmisra@cs.wright.edu}}%
}
\begin{document}

\maketitle
\thispagestyle{empty}
\pagestyle{empty}

%%%%%%%%%%%%%%%%%%%%%%%%%%%%%%%%%%%%%%%%%%%%%%%%%%%%%%%%%%%%%%%%%%%%%%%%%%%%%%%%
\begin{abstract}
This paper addresses the problem of distributed control for leader-follower multi-agent systems under prescribed performance guarantees. Leader-follower is meant in the sense that a group of agents with external inputs are selected as leaders in order to drive the group of followers in a way that the entire system can achieve consensus within certain prescribed performance transient bounds. Under the assumption of tree graphs, a distributed control law is proposed when the decay rate of the performance functions is within a sufficient bound. Then, two classes of tree graphs that can have additional followers are investigated. Finally, several simulation examples are given to illustrate the results.

\end{abstract}

%%%%%%%%%%%%%%%%%%%%%%%%%%%%%%%%%%%%%%%%%%%%%%%%%%%%%%%%%%%%%%%%%%%%%%%%%%%%%%%%
\section{INTRODUCTION}
The consensus problem has attracted great interest due to its wide applications in  robotics, cooperative control~\cite{fax2003information}, formation~\cite{balch1998behavior} and flocking~\cite{tanner2007flocking}. Consensus or agreement is achieved when a group of agents converge to a common value. The first order consensus protocol was first introduced in~\cite{olfati2004consensus}, where the authors discussed the consensus problem of directed and undirected graphs with fixed or switching topologies and time delays. Second order consensus protocol has been investigated in~\cite{ren2007distributed}, where the states
of the agents converge to a constant or a linear function.     

In this work, we study the consensus problem in a leader-follower framework, that is, one or more agents are selected as \emph{leaders} with external inputs in addition to the first order consensus protocol. The remaining agents are \emph{followers} only obeying the first order consensus protocol. Recent research that has been done in the leader-follower framework can be divided into two parts. The first part deals with the controllability of leader-follower multi-agent systems. For instance, controllability of networked systems was first investigated in~\cite{tanner2004controllability} by deriving conditions on the network topology, which ensures that the network can be controlled by a particular member which acts as a leader. In~\cite{egerstedt2012interacting,rahmani2009controllability}, the authors identify necessary conditions for the controllability of the corresponding leader-follower networks using equitable partitions of graphs. Controllability conditions for leader-follower multi-agent systems with double integrator dynamics and their connection with graph topology properties are addressed in~\cite{goldin2010controllability}. The second part targets leader selection problems~\cite{yaziciouglu2013leader,patterson2010leader,franchi2018online}. These involve the problem of how to choose the leaders among the agents such that the leader-follower system satisfies the requirements such as controllability, optimal performance or formation maintenance.

Prescribed performance control (PPC) was originally proposed in~\cite{bechlioulis2008robust}, with the aim to prescribe the evolution of system output or the tracking error within some predefined region. For example, an agreement protocol that can additionally achieve prescribed performance for a combined error of positions and velocities is designed in~\cite{macellari2017multi} for multi-agent systems with double integrator dynamics, while PPC for multi-agent average consensus with single integrator dynamics is presented in~\cite{karayiannidis2012multi}. In~\cite{bechlioulis2014robust}, the authors consider the formation control problem for nonlinear multi-agent systems with prescribed performance guarantees and connectivity constraints. Funnel control, which uses a similar idea as PPC was first introduced in~\cite{ilchmann2005tracking} for reference tracking. In~\cite{berger2018funnel}, the authors utilize funnel control for uncertain nonlinear systems that have arbitrary strict relative degree and input-to-state stable internal dynamics.

% In~\cite{verginis2018robust}, the authors use PPC for large vehicle platoons formation while avoiding connectivity breaks and collisions with neighboring vehicles.~\cite{bechlioulis2014low} utilizes PPC for a class of pure feedback systems. Applying PPC under temporal logic specifications are presented in~\cite{lindemann2017prescribed,verginis2018timed}. The authors investigate PPC for nonlinear systems subject to a subset of signal temporal logic specifications in~\cite{lindemann2017prescribed}, while a distributed cooperative manipulation problem satisfying a given metric interval temporal logic (MITL) specification under PPC framework is studied in~\cite{verginis2018timed}. 

In this work, we are interested in how to design control strategies for the leaders such that the leader-follower multi-agent system achieves consensus within certain performance bounds. Compared with existing work of PPC for multi-agent systems~\cite{macellari2017multi}, we apply a PPC law only to the leaders while most of the related work, including~\cite{macellari2017multi}, applies PPC to all the agents to achieve consensus. The benefit of this work is to lower the cost and control effort since the followers will follow the leaders by obeying first order consensus protocols without any additional control. Unlike other leader-follower consensus approaches using PPC~\cite{katsoukis2016output}, in which the multi-agent system only has one leader and the leader is treated as a reference for the followers, we focus on a more general framework in the sense that we can have more than one leader and the leaders are designed in order to steer the entire system achieving consensus within the prescribed performance bounds. The difficulties in this work are due to the combination of uncertain topologies, leader amount and leader positions. In addition, the leader can only communicate with its neighbouring agents. The contributions of the paper can be summarized as: i) within this general leader-follower framework, under the assumption of tree graphs, a distributed control law is proposed when the decay rate of the performance functions is within a sufficient bound; ii) the specific classes of chain and star graphs that  can  have  additional  followers are investigated. 

The rest of the paper is organized as follows. In Section II, preliminary knowledge is introduced and the problem is formulated, while Section
III presents the main results, which are further verified by
simulation examples in Section IV. Section V closes with concluding remarks and future work.

% PPC ensures that the error at the steady state converges to an arbitrary small residual set, with a rate not less than and an overshot less than some specific values. PPC is quite useful for constraining system behavior or error within some predefined performance functions. 

%%%%%%%%%%%%%%%%%%%%%%%%%%%%%%%%%%%%%%%%%%%%%%%%%%%%%%%%%%%%%%%%%%%%%%%%%%%%%%%%
\section{PRELIMINARIES AND PROBLEM STATEMENT}

\subsection{Graph Theory} 
An undirected graph~\cite{mesbahi2010graph} $\mathcal{G}=(\mathcal{V},\mathcal{E})$ comprises of the vertices set $\mathcal{V}=\{1,2,\dots,n\}$ and the edges set $\mathcal{E}=\{(i,j)\in \mathcal{V}\times \mathcal{V}\mid j\in \mathcal{N}_i\}$ indexed by $e_1,e_2,\dots, e_m$. Here, $m=|\mathcal{E}|$ is the number of edges and $\mathcal{N}_i$ denotes the agents in the neighbourhood of agent $i$ that can communicate with $i$. The \emph{adjacency matrix} $\mathbb{A}$ of $\mathcal{G}$ is the $n\times n$ symmetric matrix whose elements $a_{ij}$ are given by $a_{ij}=1$, if $(i,j)\in \mathcal{E}$, and $a_{ij}=0$, otherwise. The degree of vertex $i$ is defined as $d_i={\sum}_{j\in \mathcal{N}_i}a_{ij}$. Then the \emph{degree matrix} is $\Delta=\mbox{diag}(d_1, d_2,\dots,d_n)$. The \emph{graph Laplacian} of $\mathcal{G}$ is $L=\Delta-\mathbb{A}$. A \emph{path} is a sequence of edges connecting two distinct vertices. A graph is \emph{connected} if there exists a path between any pair of vertices. By assigning an orientation to each edge of $\mathcal{G}$ we can define the \emph{incidence matrix} $D=D(\mathcal{G})=[d_{ij}]\in \mathbb{R}^{n\times m}$. The rows of $D$ are indexed by the vertices and the columns are indexed by the edges with $d_{ij}=1$ if the vertex $i$ is the head of the edge $(i,j)$, $d_{ij}=-1$ if the vertex $i$ is the tail of the edge $(i,j)$ and $d_{ij}=0$ otherwise. Based on the incidence matrix, the graph Laplacian of $\mathcal{G}$ can be described as $L=DD^T$. In addition, $L_e=D^TD$ is the so called \emph{edge Laplacian}~\cite{mesbahi2010graph} and $c_{ij}$ denotes the elemnts of $L_e$. 

%%%%%%%%%%%%%%%%%%%%%%%%%%%%%%%%%%%%%%%%%%%%%%%%%%%%%%%%%%%%%%%%%%%%%%%%%%%%%%%%%
\subsection{System Description}
In this work, we consider a multi-agent system with vertices  $\mathcal{V}=\{1,2,\dots,n\}$. Without loss of generality, we suppose that the first $n_f$ agents are selected as followers while the last $n_l$ agents are selected as leaders with respective vertices set $\mathcal{V}_F=\{1,2,\dots,n_f\}$,  $\mathcal{V}_L=\{n_f+1,n_f+2,\dots,n_f+n_l\}$ and $n=n_f+n_l$.  

Let $x_i\in \mathbb{R}$ be the position of agent $i$, where we only consider the one dimensional case, without loss of generality. Specifically, the results can be extended to higher dimensions with appropriate use of the Kronecker product. The state evolution of each follower $i\in \mathcal{V}_F$ is governed by the first order agreement protocol:
\begin{equation}\label{eq:fdynamic}
\dot{x}_i=\sum\limits_{j\in \mathcal{N}_i}(x_j-x_i),    
\end{equation}
while the state evolution of each leader $i\in \mathcal{V}_L$ is governed by the first order agreement protocol with an assigned external input $u_i\in \mathbb{R}$:
\begin{equation}\label{eq:ldynamic}
\dot{x}_i=\sum\limits_{j\in \mathcal{N}_i}(x_j-x_i)+u_i.   
\end{equation}

Let $x=[x_1,\dots, x_{n_f},\dots,x_n]^T\in \mathbb{R}^n$ be the stack vector of absolute positions of all the agents and $u=[u_1,\dots,u_{n_l}]^T\in \mathbb{R}^{n_l}$ be the control input vector . Denote $\bar{x}=[\bar{x}_1,\dots,\bar{x}_m]^T$ as the stack vector of relative positions between the pair of communicating agents $(i,j)\in \mathcal{E}$, where $\bar{x}_k\triangleq x_{ij}=x_i-x_j, k=1,2,\dots, m$. It can be easily verified that $Lx=D\bar{x}$ and $\bar{x}=D^Tx$. In addition, if $\bar{x}=0$, we have that $Lx=0$. By stacking \eqref{eq:fdynamic} and \eqref{eq:ldynamic}, the dynamics of the leader-follower multi-agent system is rewritten as:
\begin{equation}\label{eq:nodedynamic}
  \Sigma:  \dot{x}=-Lx+Bu,
\end{equation}
where $L$ is the graph Laplacian and $B=\left[
\begin{smallmatrix} 0_{n_f\times n_l}\\ I_{n_l}
\end{smallmatrix}\right] .$
%%%%%%%%%%%%%%%%%%%%%%%%%%%%%%%%%%%%%%%%%%%%%%%%%%%%%%%%%%%%%%%%%%%%%%%%%%%%%%%%
\subsection{Prescribed Performance Control}
The aim of PPC is to prescribe the evolution of the system output or the tracking error within some predefined region described as follows:
\begin{equation}\label{eq:pfp}
-M_{ij}\rho_{ij}(t)<x_{ij}(t)<\rho_{ij}(t)\mbox{    \hspace{5mm}  if  }x_{ij}(0)>0
\end{equation}
\begin{equation}\label{eq:pfn}
-\rho_{ij}(t)<x_{ij}(t)<M_{ij}\rho_{ij}(t)\mbox{    \hspace{5mm}  if  }x_{ij}(0)<0
\end{equation}
$\rho_{ij}(t):\mathbb{R}_+\rightarrow \mathbb{R}_+\setminus \{0\}$ are positive, smooth and stirctly decreasing performance functions that introduce the predefined bounds for the target system outputs or the tracking errors. One example choice is $\rho_{ij}(t)=(\rho_{ij0}-\rho_{ij\infty})e^{-l_{ij}t}+\rho_{ij\infty}$ with $\rho_{ij0},\rho_{ij\infty}$ and $l_{ij}$ positive parameters and $\rho_{ij\infty}=\text{lim}_{t\rightarrow\infty}\rho_{ij}(t)$ represents the maximum allowable tracking error at the steady state; $M_{ij}$ represents the maximum allowed overshot.     

Normalizing $x_{ij}(t)$ with respect to the performance function $\rho_{ij}(t)$, we define the modulated error as $\hat{x}_{ij}(t)$ and the corresponding prescribed performance region $\mathcal{D}_{ij}$:
\begin{equation}\label{eq:modout}
\hat{x}_{ij}(t)=\frac{x_{ij}(t)}{\rho_{ij}(t)}
\end{equation}

\begin{equation}\label{eq:modpfp}
\mathcal{D}_{ij}\triangleq \{\hat{x}_{ij}:\hat{x}_{ij}\in (-M_{ij},1)\}\mbox{\hspace{5mm}  if  }x_{ij}(0)>0
\end{equation}
\begin{equation}\label{eq:modpfn}
\mathcal{D}_{ij}\triangleq \{\hat{x}_{ij}:\hat{x}_{ij}\in (-1,M_{ij})\}\mbox{\hspace{5mm}  if  }x_{ij}(0)<0
\end{equation}
Then the modulated error is transformed through the transformed function $T_{ij}$ that defines the smooth and strictly increasing mapping $T_{ij}: \mathcal{D}_{ij}\rightarrow \mathbb{R}$ and $T_{ij}(0)=0$. One example choice is \begin{equation}
    T_{ij}(\hat{x}_{ij})=\ln \left(-\frac{\hat{x}_{ij}+1}{\hat{x}_{ij}-M_{ij}}\right). 
\end{equation}
Hence the transformed error is defined as 
\begin{equation}\label{eq:tranerr}
\varepsilon_{ij}(\hat{x}_{ij})=T_{ij}(\hat{x}_{ij}) 
\end{equation}
It can be verified that if the transformed error $\varepsilon_{ij}(\hat{x}_{ij})$ is bounded, then the modulated error $\hat{x}_{ij}$ is constrained within the regions \eqref{eq:modpfp}, \eqref{eq:modpfn}. This also implies the error $x_{ij}$ evolves within the predefined performance bounds \eqref{eq:pfp} and \eqref{eq:pfn}, respectively.
Differentiating \eqref{eq:tranerr} with respect to time, we derive
\begin{equation}\label{eq:diff}
\dot{\varepsilon}_{ij}(\hat{x}_{ij})=\mathcal{J}_{T_{ij}}(\hat{x}_{ij},t)[\dot{x}_{ij}+\alpha_{ij}(t)x_{ij}]
\end{equation}
where
\begin{equation}\label{eq:jacobian}
\mathcal{J}_{T_{ij}}(\hat{x}_{ij},t)\triangleq \frac{\partial T_{ij}(\hat{x}_{ij})}{\partial \hat{x}_{ij}}\frac{1}{\rho_{ij}(t)}>0
\end{equation}
\begin{equation}\label{eq:alpha}
\alpha_{ij}(t)\triangleq -\frac{\dot{\rho}_{ij}(t)}{\rho_{ij}(t)}>0
\end{equation}
are the normalized Jacobian of the transformation function $T_{ij}$ and the normalized derivative of the performance function, respectively.  
% By further requiring $T_{ij}(0)=0$ we can derive that
% \begin{equation}\label{eq:inequality}
% \hat{x}_{ij}\frac{\partial\varepsilon_{ij}(\hat{x}_{ij})}{\partial\hat{x}_{ij}}\varepsilon_{ij}(\hat{x}_{ij})\geq \mu_{ij}\varepsilon_{ij}^2(\hat{x}_{ij})
% \end{equation}  
% for some positive constant $\mu_{ij}$. \eqref{eq:inequality} is useful for further stability analysis.
%%%%%%%%%%%%%%%%%%%%%%%%%%%%%%%%%%%%%%%%%%%%%%%%%%%%%%%%%%%%%%%%%%%%%%%%%%%%%%%%%%
\subsection{Problem Statement}
In this work, we are interested in how to design a control strategy for the leader-follower multi-agent system given by \eqref{eq:nodedynamic} such that the controlled system can achieve consensus within the prescribed performance requirements. The control strategy is only applied to the leaders and these drive the followers to guarantee the entire multi-agent system meet the requirements. Formally,  

\begin{problem}
Let the leader-follower multi-agent system $\Sigma$ defined by \eqref{eq:nodedynamic} with the communication graph $\mathcal{G}=(\mathcal{V},\mathcal{E})$ and the prescribed performance functions $\rho_{ij}, (i,j)\in \mathcal{E}$. Derive a control strategy such that the controlled leader-follower multi-agent system achieves consensus within $\rho_{ij}$.
\end{problem}

%%%%%%%%%%%%%%%%%%%%%%%%%%%%%%%%%%%%%%%%%%%%%%%%%%%%%%%%%%%%%%%%%%%%%%%%%%%%%%%%
\section{MAIN RESULTS}
In this section, we design the control for the leader-follower multi-agent system \eqref{eq:nodedynamic} such that the system can achieve consensus within the prescribed performance functions
\begin{equation}\label{eq:performancefunction} 
\rho_{ij}(t)=(\rho_{ij0}-\rho_{ij\infty})e^{-l_{ij}t}+\rho_{ij\infty}.
\end{equation}
Here the performance functions are chosen as \eqref{eq:performancefunction} without loss of generality and the communication agents share information about their performance functions and transformation functions, that is, $\rho_{ij}(t)=\rho_{ji}(t), M_{ij}=M_{ji}$ and $T_{ij}(\hat{x}_{ij})=-T_{ji}(\hat{x}_{ji})$. This means the communication between the neighbouring agents are bidirectional and the graph $\mathcal{G}$ is assumed undirected. 

Consensus is achieved in the sense that the stack vector $\bar{x}$ of relative positions converges to zero as $t\rightarrow \infty$. We then rewrite the dynamics of the leader-follower multi-agent system \eqref{eq:nodedynamic} into the edge space in order to characterise the dynamics of the relative positions. We first rewrite \eqref{eq:nodedynamic} into the dynamics corresponding to followers and leaders, respectively. The corresponding incidence matrix is denoted as $D=\begin{bmatrix}
D_f^T&D_i^T
\end{bmatrix}^T$ with $D_f, D_i$ denoting the incidence matrices that characterise how followers and leaders are connected with other agents. Then \eqref{eq:nodedynamic} is reorganised as 
\begin{equation}\label{eq:DS_re}
\Sigma:\begin{bmatrix}
\dot{x}_f\\ \dot{x}_l
\end{bmatrix}=\begin{bmatrix}
A_f&B_f\\ B_f^T&A_i
\end{bmatrix}\begin{bmatrix}
x_f\\ x_l
\end{bmatrix}+\begin{bmatrix}
0_{n_f\times n_l}\\ I_{n_l}
\end{bmatrix}u,
\end{equation}
where $x_f=\begin{bmatrix}
x_1&x_2&\cdots&x_{n_f}
\end{bmatrix}^T,x_l=\begin{bmatrix}
x_{n_f+1}&\cdots&x_{n_f+n_l}
\end{bmatrix}^T  $ and $A_f=D_fD_f^T,B_f=D_fD_i^T,A_i=D_iD_i^T$. Multiplying with $D^T$ on both sides of \eqref{eq:DS_re}, we obtain the dynamics on the edge space as 
\begin{equation}\label{eq:DS_edge}
\Sigma_e: \dot {\bar{x}}=-L_e\bar{x}+D_i^Tu,
\end{equation}
with the edge Laplacian $L_e$. We know that $L_e$ is positive definite if the graph is a tree~\cite{dimarogonas2010stability}. We thus here assume  the following
% , and $\lambda_{\min}(L_e), \lambda_{\max}(L_e)$ represent the smallest and largest eigenvalues of $L_e$, respectively
\begin{assumption} 
The leader-follower multi-agent system \eqref{eq:nodedynamic} described by the graph $\mathcal{G}=(\mathcal{V},\mathcal{E})$ is a connected tree.
\end{assumption}
We consider tree graphs as a starting point since we need the positive definiteness of $L_e$ in the analysis, and motivated by the fact that they require less communication load (less edges) for their implementation. Note however that further results for a general graph could be built based on the results of tree graphs, for example, through graph decompositions~\cite{zelazo2011edge}. For the leader-follower multi-agent system \eqref{eq:DS_edge}, the proposed controller applied to the leader agents is the composition of the term based on prescribed performance of the positions of the neighbouring agents:
\begin{equation}\label{eq:control_node}
u_i=-\sum\limits_{j\in\mathcal{N}_i}g_{ij}\mathcal{J}_{T_{ij}}(\hat{x}_{ij},t)\varepsilon_{ij}(\hat{x}_{ij}), \hspace{5mm} i\in \mathcal{V}_L,
\end{equation} 
where $g_{ij}=g_{ji}$ is a positive scalar gain to be appropriately tuned. Then the stack input vector is 
\begin{equation}\label{eq:control}
u=-D_i\mathcal{J}_{T}(\hat{\bar{x}},t)G\varepsilon(\hat{\bar{x}}),
\end{equation} 
where $\hat{\bar{x}}$ is the stack vector of transformed errors $\hat{x}_{ij}$, $G\in \mathbb{R}^{m\times m}$ is a positive definite diagonal gain matrix with entries $g_{ij}$. $\mathcal{J}_T(\hat{\bar{x}},t)\in \mathbb{R}^{m\times m}$ is a time varying diagonal matrix with diagonal entries $\mathcal{J}_{T_{ij}}(\hat{x}_{ij},t)$, $\varepsilon(\hat{\bar{x}})\in \mathbb{R}^{m}$ is a stack vector with entries $\varepsilon_{ij}(\hat{x}_{ij})$. Then the edge dynamics \eqref{eq:DS_edge} with input \eqref{eq:control} can be written as
% \begin{equation}\label{eq:DS_control_or}
% \dot {x}=-Lx-BD_i\mathcal{J}_{T}(\hat{\bar{x}},t)\varepsilon(\hat{\bar{x}}),
% \end{equation} 
\begin{equation}\label{eq:DS_control}
\dot {\bar{x}}=-L_e\bar{x}-D_i^TD_i\mathcal{J}_{T}(\hat{\bar{x}},t)G\varepsilon(\hat{\bar{x}}),
\end{equation}

In the sequel, we develop the following result and will use Lyapunov-like methods to prove that the prescribed performance can be guaranteed and consensus can be achieved. 
\begin{theorem}
Consider the leader-follower multi-agent system $\Sigma$ under Assumption 1 with dynamics \eqref{eq:nodedynamic}, the predefined performance functions $\rho_{ij}$ as in \eqref{eq:performancefunction} and the transformation function s.t. $T_{ij}(0)=0,\forall (i,j)\in \mathcal{E}$, and assume that the initial conditions $x_{ij}(0)$ are within the performance bounds \eqref{eq:pfp} or \eqref{eq:pfn}. If the following condition holds:
\begin{equation} \label{eq:condition}
\bar{\gamma}\geq l= \max\limits_{(i,j)\in \mathcal{E}}( l_{ij}),
\end{equation}
where $l$ is the largest decay rate of $\rho_{ij}(t)$ and $\bar{\gamma}$ is the maximum value of $\gamma$ that ensures:
\begin{equation} \label{eq:bounds} 
\Gamma=\left[\begin{smallmatrix}
D_i^TD_i&\frac{1}{2}\left(L_e-\gamma\left(I_m-D_i^TD_i\right)\right)\\
\frac{1}{2}\left(L_e-\gamma\left(I_m-D_i^TD_i\right)\right)&\gamma L_e
\end{smallmatrix}\right]\geq 0.
\end{equation}
Then, the controlled system achieves consensus within the prescribed performance bounds $\rho_{ij}(t)$ when applying the control \eqref{eq:control}.
\end{theorem}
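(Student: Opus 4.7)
The plan is to follow the standard prescribed-performance roadmap: first guarantee a local solution of the closed-loop edge dynamics \eqref{eq:DS_control} that keeps the modulated errors inside the open region $\mathcal{D}=\prod_{(i,j)\in\mathcal{E}}\mathcal{D}_{ij}$; next use a Lyapunov argument tied to the LMI \eqref{eq:bounds} to show the transformed error $\varepsilon$ is uniformly bounded on the maximal interval of existence; then rule out finite-time escape from $\mathcal{D}$; and finally conclude consensus.

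Local existence is immediate because the right-hand side of \eqref{eq:DS_control} is continuous in $t$ and smooth (hence locally Lipschitz) in $\bar{x}$ on the open set $\mathcal{D}$, since each $T_{ij}$ and $\mathcal{J}_{T_{ij}}$ are smooth on $\mathcal{D}_{ij}$. Standard ODE existence on open sets then yields a unique maximal solution $\bar{x}\colon[0,\tau_{\max})\to\mathcal{D}$ whenever $\hat{x}_{ij}(0)\in\mathcal{D}_{ij}$, which holds by assumption. The real work is to show $\tau_{\max}=\infty$ and $\bar{x}(t)\to 0$.

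For the Lyapunov stage, guided by the $2\times 2$ block pattern of $\Gamma$, I would take the candidate $V=\tfrac{1}{2}\bar{x}^T\bar{x}+\tfrac{\gamma}{2}\varepsilon^T G\varepsilon$ (or a close variant) and differentiate using $\dot{\varepsilon}=\mathcal{J}_T(\dot{\bar{x}}+\alpha\bar{x})$ together with \eqref{eq:DS_control}. After expansion, $\dot{V}$ should reduce to a quadratic form in the pair $(\bar{x},\,\mathcal{J}_T G\varepsilon)$ whose kernel is, up to sign and rearrangement, exactly the matrix appearing inside \eqref{eq:bounds}. The cross term $\varepsilon^T G\mathcal{J}_T\alpha(t)\bar{x}$ is what forces $\gamma$ to enter: since the performance functions \eqref{eq:performancefunction} satisfy $\alpha_{ij}(t)\leq l_{ij}\leq l\leq\bar{\gamma}$, one can dominate the time-varying $\alpha(t)$ by $\gamma I_m$ and thereby replace that cross term by the constant off-diagonal block $\tfrac{1}{2}(L_e-\gamma(I_m-D_i^T D_i))$ visible in \eqref{eq:bounds}. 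The LMI $\Gamma\geq 0$ then gives $\dot{V}\leq 0$ on $[0,\tau_{\max})$, so $V$ and hence $\varepsilon$ stay bounded.

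Boundedness of $\varepsilon$ keeps each $\hat{x}_{ij}$ inside a compact subset of $\mathcal{D}_{ij}$, so the usual PPC contradiction (were $\tau_{\max}<\infty$, $\hat{\bar{x}}$ would have to leave every compact subset of $\mathcal{D}$) yields $\tau_{\max}=\infty$ and the bounds \eqref{eq:pfp}--\eqref{eq:pfn} hold for all $t$. Consensus follows because $V$ is non-increasing and bounded below, hence $\dot V\in L^1$; combined with positive definiteness of $L_e$ on trees under Assumption~1 and a Barbalat or LaSalle-type step, this drives $\bar{x}\to 0$. The main obstacle I expect is the algebra in the third step: rearranging $\dot{V}$ into precisely the $\Gamma$-quadratic form and verifying that the domination $\alpha(t)\preceq\bar{\gamma}I_m$ is sharp enough so that condition \eqref{eq:condition} alone suffices, without further hidden assumptions on the gain matrix $G$ or on $\mathcal{J}_T$.
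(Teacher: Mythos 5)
Your roadmap coincides with the paper's proof: the same Lyapunov function (up to placement of $\gamma$), the same completion of $\dot V$ into a quadratic form in $y=\bigl[\mathcal{J}_{T}G\varepsilon,\ \bar{x}\bigr]$ whose kernel is $\Gamma$, the same use of the LMI \eqref{eq:bounds} and of Barbalat's lemma, plus a forward-invariance/maximal-solution argument that the paper only sketches. Two points need attention. First, a minor one: to land exactly on $\Gamma$ as written in \eqref{eq:bounds} the weight $\gamma$ must sit on the $\bar{x}$ term, i.e.\ $V=\tfrac12\varepsilon^TG\varepsilon+\tfrac{\gamma}{2}\bar{x}^T\bar{x}$; your variant $\tfrac12\bar{x}^T\bar{x}+\tfrac{\gamma}{2}\varepsilon^TG\varepsilon$ produces a differently weighted matrix, so the hedge ``or a close variant'' has to be resolved in the paper's direction.

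Second, and this is the genuine gap, the step you yourself flag as the main obstacle --- dominating the cross term $\varepsilon^TG\mathcal{J}_{T}\alpha(t)\bar{x}$ by $\gamma\,\varepsilon^TG\mathcal{J}_{T}\bar{x}$ --- does \emph{not} follow from the scalar bound $\alpha_{ij}(t)\le l_{ij}\le\bar{\gamma}$ alone: replacing the positive diagonal matrix $\alpha(t)$ by the larger $\gamma I_m$ increases the expression only if each summand $\varepsilon_{ij}\,g_{ij}\mathcal{J}_{T_{ij}}\bar{x}_{ij}$ is nonnegative. The missing ingredient is precisely the hypothesis $T_{ij}(0)=0$, which together with strict monotonicity of $T_{ij}$ gives $\varepsilon_{ij}(\hat{x}_{ij})\hat{x}_{ij}\ge 0$, and since $\rho_{ij}>0$ the signs of $\hat{x}_{ij}$ and $\bar{x}_{ij}$ agree; hence $-\varepsilon^TG\mathcal{J}_{T}(\gamma I_m-\alpha(t))\bar{x}\le 0$ termwise, which is exactly how the paper discards this residual and reduces $\dot V\le -y^T\Gamma y\le 0$. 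Your worry about ``hidden assumptions on $G$ or $\mathcal{J}_T$'' is resolved not by those objects (only their positivity is used) but by this sign lemma, which your proposal never invokes even though it is stated as an explicit hypothesis of the theorem. With that lemma inserted, the rest of your argument (boundedness of $V$, hence of $\varepsilon$, hence invariance of $\mathcal{D}$, hence $\tau_{\max}=\infty$, then Barbalat and positive definiteness of $L_e$ on trees to conclude $\bar{x}\to 0$) goes through as in the paper.
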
 
\begin{proof}
Consider the Lyapunov-like function
\begin{equation}\label{eq:lyapu}
 V(\varepsilon_{\hat{\bar{x}}},\bar{x})=\frac{1}{2}\varepsilon_{\hat{\bar{x}}}^TG\varepsilon_{\hat{\bar{x}}}+\frac{\gamma}{2}\bar{x}^T\bar{x},   
\end{equation}
with $\varepsilon_{\hat{\bar{x}}}$ denoting $\varepsilon(\hat{\bar{x}})$ and $\mathcal{J}_{T_{\hat{\bar{x}}}}$ denoting $\mathcal{J}_{T}(\hat{\bar{x}},t)$. Then, $\dot{V}=\varepsilon_{\hat{\bar{x}}}^TG\dot{\varepsilon}_{\hat{\bar{x}}}+\gamma \bar{x}^T\dot{\bar{x}}$. Replacing $\dot{\varepsilon}_{\hat{\bar{x}}}$ according to \eqref{eq:diff}, we obtain
\begin{equation}
    \dot{V}=\varepsilon_{\hat{\bar{x}}}^TG\mathcal{J}_{T_{\hat{\bar{x}}}}(\dot{\bar{x}}+\alpha(t)\bar{x})+\gamma \bar{x}^T\dot{\bar{x}},
\end{equation}
where $\alpha(t)$ is the diagonal matrix with diagonal entries $\alpha_{ij}(t)$. According to \eqref{eq:alpha} and \eqref{eq:performancefunction}, we know that $\alpha_{ij}(t)< l_{ij}, \forall t$. Substituting \eqref{eq:DS_control}, we can further derive that

% \begin{equation}
% \begin{aligned}
% \dot{V}=&\varepsilon_{\hat{\bar{x}}}^TG\dot{\varepsilon}_{\hat{\bar{x}}}+\gamma \bar{x}^T\dot{\bar{x}}\\=&
% \varepsilon_{\hat{\bar{x}}}^TG\mathcal{J}_{T_{\hat{\bar{x}}}}(\dot{\bar{x}}+\alpha(t)\bar{x})+\gamma \bar{x}^T(-L_e\bar{x}-D_i^TD_i\mathcal{J}_{T_{\hat{\bar{x}}}}G\varepsilon_{\hat{\bar{x}}})
% \\=&\varepsilon_{\hat{\bar{x}}}^TG\mathcal{J}_{T_{\hat{\bar{x}}}}\dot{\bar{x}}+\varepsilon_{\hat{\bar{x}}}^TG\mathcal{J}_{T_{\hat{\bar{x}}}}\alpha(t)\bar{x}-\gamma \bar{x}^TL_e\bar{x}\\&-\gamma \bar{x}^TD_i^TD_i\mathcal{J}_{T_{\hat{\bar{x}}}}G\varepsilon_{\hat{\bar{x}}},
% \end{aligned}
% \end{equation}

\begin{equation}\label{eq:lya1}
\begin{aligned}
\dot{V}=&\varepsilon_{\hat{\bar{x}}}^TG\mathcal{J}_{T_{\hat{\bar{x}}}}(-L_e\bar{x}-D_i^TD_i\mathcal{J}_{T_{\hat{\bar{x}}}}G\varepsilon_{\hat{\bar{x}}}+\alpha(t)\bar{x})\\&+\gamma \bar{x}^T(-L_e\bar{x}-D_i^TD_i\mathcal{J}_{T_{\hat{\bar{x}}}}G\varepsilon_{\hat{\bar{x}}})
\\=&-\varepsilon_{\hat{\bar{x}}}^TG\mathcal{J}_{T_{\hat{\bar{x}}}}L_e\bar{x}+\varepsilon_{\hat{\bar{x}}}^TG\mathcal{J}_{T_{\hat{\bar{x}}}}\alpha(t)\bar{x}\\&-\varepsilon_{\hat{\bar{x}}}^TG\mathcal{J}_{T_{\hat{\bar{x}}}}D_i^TD_i\mathcal{J}_{T_{\hat{\bar{x}}}}G\varepsilon_{\hat{\bar{x}}}-\gamma \bar{x}^TL_e\bar{x}\\&-\gamma \bar{x}^TD_i^TD_i\mathcal{J}_{T_{\hat{\bar{x}}}}G\varepsilon_{\hat{\bar{x}}}
\end{aligned} 
\end{equation}
Adding and subtracting  $\gamma\varepsilon_{\hat{\bar{x}}}^TG\mathcal{J}_{T_{\hat{\bar{x}}}}\bar{x}$ on the right hand side of \eqref{eq:lya1}, we obtain
\begin{equation}\label{eq:lya2}
\begin{aligned}
\dot{V}=&-\varepsilon_{\hat{\bar{x}}}^TG\mathcal{J}_{T_{\hat{\bar{x}}}}(\gamma I_m-\alpha(t))\bar{x}-\varepsilon_{\hat{\bar{x}}}^TG\mathcal{J}_{T_{\hat{\bar{x}}}}D_i^TD_i\mathcal{J}_{T_{\hat{\bar{x}}}}G\varepsilon_{\hat{\bar{x}}}\\&-\varepsilon_{\hat{\bar{x}}}^TG\mathcal{J}_{T_{\hat{\bar{x}}}}L_e\bar{x}-\gamma \bar{x}^TL_e\bar{x}+\gamma\varepsilon_{\hat{\bar{x}}}^TG\mathcal{J}_{T_{\hat{\bar{x}}}}(I_m-D_i^TD_i)\bar{x}\\=&-\varepsilon_{\hat{\bar{x}}}^TG\mathcal{J}_{T_{\hat{\bar{x}}}}(\gamma I_m-\alpha(t))\bar{x}
\\&-y^T\left[\begin{smallmatrix}
D_i^TD_i&\frac{1}{2}\left(L_e-\gamma\left(I_m-D_i^TD_i\right)\right)\\
\frac{1}{2}\left(L_e-\gamma\left(I_m-D_i^TD_i\right)\right)&\gamma L_e
\end{smallmatrix}\right]y
\\=&-\varepsilon_{\hat{\bar{x}}}^TG\mathcal{J}_{T_{\hat{\bar{x}}}}(\gamma I_m-\alpha(t))\bar{x}-y^T\Gamma y
\end{aligned}
\end{equation}
% \begin{equation}\label{eq:lya3}
% \begin{aligned}
% \dot{V}=&-\varepsilon_{\hat{\bar{x}}}^TG\mathcal{J}_{T_{\hat{\bar{x}}}}(\gamma I_m-\alpha(t))\bar{x}
% \\&-y^T\left[\begin{smallmatrix}
% D_i^TD_i&\frac{1}{2}\left(L_e-\gamma\left(I_m-D_i^TD_i\right)\right)\\
% \frac{1}{2}\left(L_e-\gamma\left(I_m-D_i^TD_i\right)\right)&\gamma L_e
% \end{smallmatrix}\right]y
% \\=&-\varepsilon_{\hat{\bar{x}}}^TG\mathcal{J}_{T_{\hat{\bar{x}}}}(\gamma I_m-\alpha(t))\bar{x}-y^T\Gamma y
% \end{aligned}
% \end{equation}
with
\begin{equation}
    y=\begin{bmatrix}
\mathcal{J}_{T_{\hat{\bar{x}}}}G\varepsilon_{\hat{\bar{x}}}\\\bar{x}
\end{bmatrix}.
\end{equation}
Since $G,\mathcal{J}_{T_{\hat{\bar{x}}}}$ are both diagonal and positive definite matrices, we have that $G\mathcal{J}_{T_{\hat{\bar{x}}}}$ is also a diagonal positive definite matrix. $(\gamma I_m-\alpha(t))$ is a diagonal positive definite matrix if $\gamma \geq l=\max(l_{ij})>\bar{\alpha}= \sup \alpha_{ij}(t)$. Due to $T_{ij}(0)=0$, we have $\varepsilon_{ij}(\hat{x}_{ij})\hat{x}_{ij}\geq 0$. Then, by by setting $\gamma:=\theta+\bar{\alpha}$, with $\theta$ being a positive constant we get:
\begin{equation}
    -\varepsilon_{\hat{\bar{x}}}^TG\mathcal{J}_{T_{\hat{\bar{x}}}}(\gamma I_m-\alpha(t))\bar{x}\leq -\theta \varepsilon_{\hat{\bar{x}}}^TG\mathcal{J}_{T_{\hat{\bar{x}}}}\bar{x}
    % -\theta \varepsilon_{\hat{\bar{x}}}^TG\mathcal{J}_{T_{\hat{\bar{x}}}} \frac{\partial \varepsilon_{\hat{\bar{x}}}}{\partial \hat{\bar{x}}}\hat{\bar{x}}\leq 0.
\end{equation}
Then, according to \eqref{eq:modout}, \eqref{eq:jacobian}, we further obtain
\begin{equation} \label{eq:28}
    -\theta \varepsilon_{\hat{\bar{x}}}^TG\mathcal{J}_{T_{\hat{\bar{x}}}}\bar{x}=
    -\theta \varepsilon_{\hat{\bar{x}}}^TG\frac{\partial \varepsilon_{\hat{\bar{x}}}}{\partial \hat{\bar{x}}}\hat{\bar{x}}\leq 0.
\end{equation}
\eqref{eq:28} holds because the transformed function is smooth and strictly increasing and $\varepsilon_{ij}(\hat{x}_{ij})\hat{x}_{ij}\geq 0$. 
Therefore, in order for $\dot{V}\leq 0$ to hold, it suffices that $\gamma \geq l=\max(l_{ij})> \sup \alpha_{ij}(t)$ and in addition, $\Gamma$ should be semi-positive definite. Here, in order for $\Gamma \geq 0$ to be feasible, we need the assumption that the communication graph is a tree. This further means that $L_e$ is positive definite and \eqref{eq:bounds} is then equivalent to:
\begin{equation}
\begin{smallmatrix}
D_i^TD_i \geq  \frac{1}{4\gamma}\left(L_e-\gamma\left(I_m-D_i^TD_i\right)\right)L_e^{-1}\left(L_e-\gamma\left(I_m-D_i^TD_i\right)\right).
\end{smallmatrix}
\end{equation}
Then, based on condition \eqref{eq:condition}, and choosing $\gamma=\bar{\gamma}$, we obtain $-\varepsilon_{\hat{\bar{x}}}^TG\mathcal{J}_{T_{\hat{\bar{x}}}}(\bar{\gamma} I_m-\alpha(t))\bar{x}\leq 0$ and $\Gamma \geq 0$. Finally, we can conclude that $\dot{V}\leq 0$ when $\gamma=\bar{\gamma}$. This also implies $V(\varepsilon_{\hat{\bar{x}}},\bar{x})\leq V(\varepsilon_{\hat{\bar{x}}}(0),\bar{x}(0))$. Hence if $\bar{x}(0))$ is chosen within the region \eqref{eq:modpfp} or \eqref{eq:modpfn} then $V(\varepsilon_{\hat{\bar{x}}}(0),\bar{x}(0))$ is finite, which implies that $V(\varepsilon_{\hat{\bar{x}}},\bar{x})$ is bounded $\forall t$. Therefore $\varepsilon_{\hat{\bar{x}}},\bar{x}$ are bounded and the boundedness of the transformed error $\varepsilon_{\hat{\bar{x}}}$ implies that the relative position $\bar{x}(t)$ evolves within the prescribed performance bounds $\forall t$. Then we can prove the boundedness of $\ddot{V}(\varepsilon_{\hat{\bar{x}}},\bar{x})$ based on the boundedness of $\varepsilon_{\hat{\bar{x}}},\dot{\varepsilon}_{\hat{\bar{x}}}$. The boundedness of $\ddot{V}(\varepsilon_{\hat{\bar{x}}},\bar{x})$ implies the uniform continuity of $\dot{V}(\varepsilon_{\hat{\bar{x}}},\bar{x})$, which in turn implies that $\dot{V}(\varepsilon_{\hat{\bar{x}}},\bar{x})\rightarrow 0$ as $t\rightarrow \infty$ by applying Barbalat's Lemma. This implies $\bar{x}\rightarrow 0$ as $t\rightarrow \infty$ and consensus will be achieved.
\end{proof}

% For a well tuned gain matrix G and a well selected group of leaders, $\Gamma\geq 0$ is feasible when $\gamma \in [\underline{\gamma},\bar{\gamma}]$. Then, based on condition \eqref{eq:condition}, we can choose $\gamma\in [\max(l,\underline{\gamma}),\bar{\gamma}]$, and $\dot{V}<0$ is concluded. Finally, since $V(\bar{x}(t))\leq V(\bar{x}(0))$, we can conclude that $V$ is bounded, and then $\varepsilon_{\hat{\bar{x}}}$ is bounded. Hence the trajectories of $(\mathbf{C}\times \Sigma)$ will evolve within the prescribed performance functions $\rho_{ij}$. Since $\dot{V}<0$, then $V(\bar{x}(t))$ will converge to zero as $t\rightarrow \infty$, so $\bar{x}(t)$ will converge to zero as $t\rightarrow \infty$ and consensus can be achieved. 

% \begin{remark}
% Actually, consensus can be easily verified by checking the equilibrium $\bar{x}_{\star}$ of \eqref{eq:DS_edge}. Since $L_e$ is positive definite, then $L_e$ has an empty kernel space. Because we have a time varying input, if $\bar{x}_{\star} \neq 0$, then the time varying input will keep changing the relative positions. But if $\bar{x}_{\star} = 0$, we know that $\dot{\bar{x}}=0$ since $u=0$ and we can conclude that $\bar{x}_{\star} = 0$ is the equilibrium. 
% \end{remark}
\begin{remark}
We are always interested in specifying the state of the multi-agent system at the equilibrium. Denote $x_c=\frac{1}{n}\sum\nolimits_{i=1}^{n}x_i$ as the centroid of the network. In most of the work regarding PPC like~\cite{macellari2017multi}, $\lim\limits_{t\rightarrow \infty} x_c(t)=x_c(0)=\frac{1}{n}\sum\nolimits_{i=1}^{n}x_i(0)$. This is because a PPC input for every agent exists. In our work, if we have an external input for every agent, i.e. $B=I_n$ in \eqref{eq:nodedynamic}, we  can also obtain $\lim\limits_{t\rightarrow \infty} x_c(t)=\frac{1}{n}\sum\nolimits_{i=1}^{n}x_i(0)$.
This can be verified by multiplying $\mathbf{1}^T$ on both sides of \eqref{eq:nodedynamic}, where $\mathbf{1}\in \mathbb{R}^n$ with all entries 1. Then, we can conclude $\dot{x}_c(t)=0$.
The main difference is that when we choose some leaders, we can achieve a varying equilibrium state of each agent by tuning the gain matrix, which is quite useful in practical design as we can decide where all the agents should gather.
\end{remark}
In the sequel, we will discuss the results for two specific classes of tree graphs: chain and star graph. First we consider the chain graph, which is wildly used for instance in autonomous vehicle platooning.  
\begin{definition}
A chain $\mathcal{G}^c=(\mathcal{V}^c,\mathcal{E}^c)$ is a tree graph with vertices set $\mathcal{V}^c=\{1,2,\dots,n\},n\geq 2$ and edges set $\mathcal{E}^c=\{(i,i+1)\in \mathcal{V}^c\times \mathcal{V}^c\mid i\in \mathcal{V}^c\setminus \{n\}\}$ indexed by $e_i=(i,i+1), i=1,2,\dots, n-1$.
\end{definition}
Note that \eqref{eq:condition} in Theorem 1 is a sufficient but not necessary condition. For a chain graph, the matrix inequality \eqref{eq:bounds} may be actually infeasible when the graph has 2 or more followers. The following result for $\mathcal{G}^c$ is derived.
\begin{proposition}
Consider the leader-follower multi-agent system $\Sigma$ described by \eqref{eq:nodedynamic} with the communication chain graph $\mathcal{G}^c=(\mathcal{V}^c,\mathcal{E}^c)$ and the followers set $\mathcal{V}_F^c=\{1,2,\dots,n_f\}$, the predefined performance functions $\rho_{ij}$ as in \eqref{eq:performancefunction} and the transformation function s.t. $T_{ij}(0)=0,\forall (i,j)\in \mathcal{E}$, and assume that the initial conditions $x_{ij}(0)$ are within the performance bounds \eqref{eq:pfp} or \eqref{eq:pfn}. Then, the chain can only have at most 3 followers ($n_f\leq 3$) in order to achieve consensus within the prescribed performance bounds $\rho_{ij}(t)$ when applying \eqref{eq:control}. Specifically, when the chain has 2 and 3 followers, 
\begin{equation} \label{eq:conditiontree}
\begin{aligned}
\max\limits_{(i,j)\in \mathcal{E}}(l_{ij})&=l\leq 2,\hspace{5mm} n_f=2;\\
\max\limits_{(i,j)\in \mathcal{E}}(l_{ij})&=l\leq 1,\hspace{5mm} n_f=3
\end{aligned}
\end{equation}
are the respective sufficient conditions under which the chain achieves consensus within the prescribed performance bounds $\rho_{ij}(t)$ when applying \eqref{eq:control}. 
\end{proposition}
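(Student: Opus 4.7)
The plan is to adapt the Lyapunov argument of Theorem~1 to the specific chain structure, since the remark preceding the proposition already observes that the global LMI $\Gamma\succeq 0$ of \eqref{eq:bounds} becomes infeasible as soon as $n_f\ge 2$: the $k$th diagonal entry of $D_i^T D_i$ counts the leader endpoints of edge $e_k$ and therefore vanishes on every follower-to-follower edge, forcing the $k$th row of the off-diagonal block of $\Gamma$ to vanish too, which is impossible because of the $-1$ off-diagonal entries of $L_e$. The starting point is to write the edge Laplacian of the chain explicitly as $L_e=\mathrm{tridiag}(-1,2,-1)\in\mathbb{R}^{(n-1)\times(n-1)}$ and note that the closed-loop edge dynamics \eqref{eq:DS_control} decompose into an uncontrolled cascade on the follower-to-follower edge block $\bar x_F=(\bar x_1,\dots,\bar x_{n_f-1})^T$, driven only by the neighbouring controlled edge $\bar x_{n_f}$, plus a PPC-controlled block on the remaining edges.

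For $n_f=2,3$ I would use the same Lyapunov $V=\tfrac12\varepsilon^T G\varepsilon+\tfrac{\gamma}{2}\bar x^T\bar x$ as in Theorem~1 and substitute the chain dynamics. The property $\varepsilon_{ij}(\hat x_{ij})\hat x_{ij}\ge 0$ kills every term of the form $g_k\mathcal{J}_k\varepsilon_k(\alpha_k-2-\gamma)\bar x_k$ as soon as $l\le 2+\gamma$, and what remains is an indefinite quadratic form in $(\mathcal{J}_T G\varepsilon,\bar x)$ that inherits the cascade structure. I would reorganise this residual as a full-rank damping $-\gamma\bar x^T L_e\bar x$ plus a Schur-complement block coupling the single controlled edge to the adjacent follower edge. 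For $n_f=2$ this coupling reduces to a scalar and the block is negative semi-definite exactly when $l\le 2$; for $n_f=3$ the coupling is a $2\times 2$ Schur complement whose admissibility pins down $l\le 1$, matching the two eigenvalues $\{1,3\}$ of the principal submatrix $L_e^{FF}=\mathrm{tridiag}(-1,2,-1)\in\mathbb{R}^{2\times 2}$. Once $\dot V\le 0$ is established, boundedness of $V$ preserves prescribed performance on every edge, and the Barbalat argument of Theorem~1 concludes with $\bar x\to 0$, i.e.\ consensus.

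For $n_f\ge 4$ the cascade contains at least three uncontrolled follower edges and the analogous Schur-complement step produces an indefinite form that cannot be rendered negative semi-definite for any positive $\gamma$: the direction associated with the slowest eigenvector of $L_e^{FF}$ creates a cross-coupling with the controlled edge that dominates the damping provided by $-\gamma L_e$, no matter how $\gamma$ is chosen. The hard part will be precisely this spectral obstruction, namely showing that no reweighted Lyapunov of the same type can close the argument once $n_f\ge 4$; the cases $n_f=2,3$ are tractable only because $L_e^{FF}$ has one or two eigenvalues and the Schur complement can be written out in closed form.
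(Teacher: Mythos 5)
Your diagnosis of why the LMI \eqref{eq:bounds} fails for $n_f\ge 2$ is correct and matches the paper's own remark: a follower-to-follower edge contributes a zero diagonal entry to $D_i^TD_i$, while the corresponding row of the off-diagonal block of $\Gamma$ contains the nonzero entry $-\tfrac12$ coming from $L_e$, so $\Gamma$ cannot be positive semidefinite. But this same observation defeats the rest of your plan. In $\dot V$ the only term that is quadratic in $\mathcal{J}_{T}G\varepsilon$ is $-(\mathcal{J}_{T}G\varepsilon)^TD_i^TD_i(\mathcal{J}_{T}G\varepsilon)$, and it vanishes identically in the follower-edge components; the cross term $g_1\mathcal{J}_1\varepsilon_1\bar x_2$ coupling the follower edge to its neighbour therefore has nothing quadratic in $\mathcal{J}_1\varepsilon_1$ to be absorbed into, and no Schur complement of a symmetric block with a zero diagonal entry and a nonzero off-diagonal entry in the same row can be made positive semidefinite, for any $\gamma$ and any $l$. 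Your claim that for $n_f=2$ the block ``is negative semi-definite exactly when $l\le 2$'' is therefore not something you can establish, and the thresholds $l\le 2$ and $l\le 1$ do not come out of the Lyapunov function of Theorem~1. The paper instead abandons the Lyapunov argument on the follower edges entirely: it isolates the uncontrolled cascade $\dot{\bar x}_f=\mathbf{A}\bar x_f+\mathbf{b}\bar x_\star$, with $\mathbf{A}$ the tridiagonal follower-edge block of $-L_e$, solves it by variation of constants, bounds the zero-input response through the modal decomposition $\mathbf{A}=M^T\mathbf{\Lambda}M$ (decay $e^{-2t}$ for $n_f=2$, $e^{-t}$ for $n_f=3$, whence the two thresholds), and handles the forced response $\int_0^te^{\mathbf{A}(t-\tau)}\mathbf{b}\bar x_\star(\tau)\,d\tau$ by requiring a sufficiently large gain on the leader adjacent to the follower chain. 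That gain condition is part of the substance of the proposition (see the accompanying remark and the simulations) and is entirely absent from your argument.

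For $n_f\ge 4$ there is a second, logical gap: showing that no reweighted Lyapunov function of the same type closes the argument proves only that a particular sufficient condition fails, not that consensus within the bounds is unachievable. The necessity direction requires exhibiting admissible initial data whose trajectories must leave the performance envelope. The paper argues this on the trajectories themselves: the modal expansion of the zero-input follower response is bounded only by $k\rho_0e^{\lambda_{\max}(\mathbf{A})t}$ with $k>1$ once $n_f\ge 4$, so some component overshoots the envelope for certain initial conditions inside the bounds, and the leader input cannot reach those interior follower edges through the cascade to prevent it. If you want to keep a Lyapunov-flavoured argument for the positive cases $n_f\in\{2,3\}$ you would still need to supply this separate obstruction argument for $n_f\ge 4$.
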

\begin{proof}
When the chain graph has only one follower, that is $n_f=1$, the result can be proved by using Theorem 1. Let $\bar{\gamma}$ be the maximum value of $\gamma$ that ensures \eqref{eq:bounds} holds. By further choosing the decay rate of the performance functions \eqref{eq:performancefunction} to satisfy \eqref{eq:condition}, we can conclude that the controlled system achieves consensus within the prescribed performance bounds by applying \eqref{eq:control} based on Theorem 1. When the chain has additional followers, the condition in Theorem 1 may be infeasible since it is a sufficient but not necessary condition. But for this kind of special chain structure, we can resort to checking the edge dynamics \eqref{eq:DS_edge} directly. It can be shown that $-L_e$ has elements given by $c_{ij}=-2$ when $i=j$, $c_{ij}=1$ when $|i-j|=1$ and $c_{ij}=0$ otherwise when the graph is a chain. We then rewrite \eqref{eq:DS_edge} as 
\begin{equation}
\left[
	\begin{array}{c}
	\dot{\bar{x}}_f \\ \hline 
	\dot{\bar{x}}_l
	\end{array}
	\right]    =\left[
	\begin{array}{c|c}
	\mathbf{A}& \mathbf{B} \\ \hline 
	\mathbf{B}^T& \mathbf{C}
	\end{array}
	\right]\left[\begin{array}{c}
	\bar{x}_f \\ \hline 
	\bar{x}_l
	\end{array}
	\right] +\left[
	\begin{array}{c}
	\mathbf{0} \\ \hline 
	\mathbf{D}
	\end{array}
	\right]u, 
\end{equation}
where $\bar{x}_f\in \mathbb{R}^{(n_f-1)}$ represents the edges between followers, while $\bar{x}_l\in \mathbb{R}^{n_l}$ represents the edge that connects the leader node $\{n_f+1\}$ and the follower node $\{n_f\}$, and the edges between leaders. Both $\mathbf{A}\in \mathbb{R}^{(n_f-1)\times (n_f-1)},\mathbf{C}\in \mathbb{R}^{n_l\times n_l}$ have the same structure as $-L_e$ but with different dimensions, $\mathbf{B}$ has an element 1 at row $(n_f-1)$, column 1 (bottom left corner) that represents the connection between the  follower node $\{n_f\}$ and the leader node $\{n_f+1\}$. $\mathbf{0}$ is a $(n_f-1)\times n_l$ zero matrix. $\mathbf{D}\in \mathbb{R}^{n_l\times n_l}$ has elements given by $\mathbf{d}_{ij}=1$ when $i=j$, $\mathbf{d}_{ij}=-1$ when $i-j=1$ and $\mathbf{d}_{ij}=0$ otherwise. Then we can analyse the leader part $\bar{x}_l$ and the follower part $\bar{x}_f$ separately. For $\bar{x}_l$, it can be proved that $\bar{x}_l$ achieves consensus within the performance bounds based on the positive definiteness of $\mathbf{D}\mathbf{D}^T$ when applying control \eqref{eq:control}. We further rewrite the follower part as 
\begin{equation}\label{eq:followerpart}
    \dot{\bar{x}}_f=\mathbf{A}\bar{x}_f+\mathbf{b}\bar{x}_{\star},
\end{equation}
where $\mathbf{b}\in \mathbb{R}^{(n_f-1)}$ is the first column of $\mathbf{B}$, i.e., with the last element equals to 1 and all other elements equal to 0. $\bar{x}_{\star}$ represents the edge between the  follower node $\{n_f\}$ and the leader node $\{n_f+1\}$. We can furture solve the state evolution of \eqref{eq:followerpart} as follows:
\begin{equation}\label{eq:followerevolution}
\begin{aligned}
    \bar{x}_f(t)&=e^{\mathbf{A}t}\bar{x}_f(0)+\int_0^t e^{\mathbf{A}(t-\tau)}\mathbf{b}\bar{x}_{\star}(\tau)d\tau\\
    &=M^Te^{\mathbf{\Lambda}t}M\bar{x}_f(0)+\int_0^t e^{\mathbf{A}(t-\tau)}\mathbf{b}\bar{x}_{\star}(\tau)d\tau,\\
   &= \bar{x}^0_f(t)+\int_0^t e^{\mathbf{A}(t-\tau)}\mathbf{b}\bar{x}_{\star}(\tau)d\tau,
\end{aligned}
\end{equation}
where $\bar{x}^0_f(t)=\begin{bmatrix}\bar{x}^0_1(t)&\bar{x}^0_2(t)&\dots&\bar{x}^0_{n_f-1}(t) \end{bmatrix}^T$ is zero input trajectories, that is when $\bar{x}_\star(t)=0,\forall t$; $\mathbf{A}=M^T\mathbf{\Lambda}M$, where $\mathbf{\Lambda}$ is a diagonal matrix with diagonal entries negative and equal to the eigenvalues of $\mathbf{A}$, which is due to $\mathbf{A}$ having the same structure as $-L_e$, and $M$ is the matrix composed with the corresponding eigenvectors. Without loss of generality, suppose all performance functions are the same and described by
\begin{equation}\label{eq:performancefunction11} 
\rho(t)=(\rho_{0}-\rho_{\infty})e^{-lt}+\rho_{\infty}.
\end{equation}
When $n_f=2$, $\bar{x}_f=\bar{x}_1$ and $\mathbf{A}=-2$, we have that 
\begin{equation}
    \bar{x}^0_1(t)=M^Te^{\mathbf{\Lambda}t}M\bar{x}_1(0)=e^{-2t}\bar{x}_1(0)<\rho_0e^{-2t}. 
\end{equation}
Then, $\bar{x}_{1}(t)$ is within the performance bound $\rho(t)$, i.e., $\bar{x}_{1}(t)<\rho(t),\forall t$, when $l\leq 2$ and in addition,
\begin{equation}\label{eq:haha}
    \int_0^t e^{-2(t-\tau)}\bar{x}_{\star}(\tau)d\tau<(\rho_0-\bar{x}_1(0))e^{-2t}+ \rho_\infty(1-e^{-2t}),
\end{equation}
which can be ensured by tuning a large enough gain $g_{32}$ to the leader indexed by node $3$. From \eqref{eq:haha}, we know that when the relative position between the two followers is close to the boundary, we need to tune a larger gain for the leader that connects the followers. When $n_f=3$, we can derive a similar result. In particular, we now have that 
\begin{equation}
   \begin{bmatrix}\bar{x}^0_1(t)\\ \bar{x}^0_2(t)\end{bmatrix} =M^Te^{\mathbf{\Lambda}t}M\begin{bmatrix}\bar{x}_1(0)\\\bar{x}_2(0)\end{bmatrix}    <k\begin{bmatrix}\rho_0\\\rho_0\end{bmatrix}e^{-t},
\end{equation}
with $k=1$, which implies that $\bar{x}^0_i(t)<\rho_0 e^{-t}, i=\{1,2\}$. Similarly, we can conclude that when $l\leq 1$, and in addition the tuning gain $g_{43}$ for the leader indexed by node 4 is large enough, the controlled system achieves consensus within the prescribed performance bounds. When $n_f\geq 4$, it can be proved similarly that $\bar{x}^0_i(t)<k\rho_0 e^{\lambda_{\max}(\mathbf{A})t}, i=\{1,2,\dots,n_f-1\}$, but with $k>1$. This means that $\bar{x}^0_i(t)$ cannot be bounded by $\rho_0 e^{\lambda_{\max}(\mathbf{A})t}$ for any initial conditions within the performance bounds. Therefore, we can conclude that in order to achieve consensus within the performance bounds for all initial condition $x_{ij}(0)$ within the performance bounds \eqref{eq:pfp} or \eqref{eq:pfn}, $n_f$ should be less or equal to 3.
\end{proof}
\begin{remark}
Proposition 1 indicates that for a chain graph, in order to achieve consensus within the prescribed performance bounds, we can only have at most 3 consecutive followers at the end of the graph. In addition, when the initial relative position between 2 followers is close to the prescribed performance boundary, we need to tune a large enough gain for the leader that connects the followers.
\end{remark}
Now we consider another specific class, in particular the star graph $\mathcal{G}^s=(\mathcal{V}^s,\mathcal{E}^s)$ which is defined as follows.
\begin{definition}
A star $\mathcal{G}^s=(\mathcal{V}^s,\mathcal{E}^s)$ is a tree graph with vertices set $\mathcal{V}^s=\{1,2,\dots,n\},n\geq 2$ where vertice $n$ is called the centering node, and the edges set $\mathcal{E}^s=\{(i,n)\in \mathcal{V}^s\times \mathcal{V}^s\mid i\in \mathcal{V}^s\setminus \{n\}\}$ indexed by $e_i=(i,n), i=1,2,\dots, n-1$. 
\end{definition}
Then, the following result can be derived.
\begin{proposition}
Consider the leader-follower multi-agent system $\Sigma$ described by \eqref{eq:nodedynamic} with the communication star graph $\mathcal{G}^s=(\mathcal{V}^s,\mathcal{E}^s)$ and the leader set $\mathcal{V}_L^s=\{n\}$, the predefined performance functions $\rho_{ij}$ as in \eqref{eq:performancefunction} and the transformation function s.t. $T_{ij}(0)=0,\forall (i,j)\in \mathcal{E}$, and assume that the initial conditions $x_{ij}(0)$ are within the performance bounds \eqref{eq:pfp} or \eqref{eq:pfn}. If 
\begin{equation} \label{eq:conditionstar}
\max\limits_{(i,j)\in \mathcal{E}}(l_{ij})=l\leq 1.
\end{equation}
Then, the controlled system achieves consensus within the prescribed performance bounds $\rho_{ij}(t)$ when applying the control \eqref{eq:control}. 
\end{proposition}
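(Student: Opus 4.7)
The plan is to apply Theorem~1 directly by showing that for the star topology the matrix inequality \eqref{eq:bounds} is feasible at $\gamma=1$, so that $\bar{\gamma}\geq 1$ and the hypothesis $l\leq 1$ is exactly what Theorem~1 asks for. Labelling the edges $e_k=(k,n)$ for $k=1,\dots,n-1$ with the centre $n$ being the single leader, the leader incidence matrix $D_i$ reduces to a single row which, up to sign, equals $\mathbf{1}^T\in\mathbb{R}^{1\times(n-1)}$, so $D_i^TD_i=\mathbf{1}\mathbf{1}^T$. A direct computation of $L_e=D^TD$ then yields $L_e=I_{n-1}+\mathbf{1}\mathbf{1}^T$, which is positive definite as expected from the tree assumption.

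Since $I_{n-1}$, $L_e$ and $D_i^TD_i$ all commute, I would diagonalise them simultaneously via the orthogonal decomposition $\mathbb{R}^{n-1}=\mathbb{R}\mathbf{1}\oplus\mathbf{1}^{\perp}$. In this basis $\Gamma$ becomes block-diagonal, with one $2\times 2$ block associated with the $\mathbf{1}$-direction and $n-2$ identical $2\times 2$ blocks associated with $\mathbf{1}^{\perp}$. At $\gamma=1$ the restriction to $\mathbf{1}^{\perp}$ reduces to
\begin{equation*}
\begin{bmatrix} 0 & 0 \\ 0 & 1 \end{bmatrix},
\end{equation*}
which is positive semidefinite, while the block on the $\mathbf{1}$-direction reduces to
\begin{equation*}
\begin{bmatrix} n-1 & n-1 \\ n-1 & n \end{bmatrix},
\end{equation*}
whose determinant equals $n-1>0$. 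Hence $\Gamma\succeq 0$ holds at $\gamma=1$.

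Combining these observations gives $\bar{\gamma}\geq 1$, so the hypothesis $l\leq 1$ secures $\bar{\gamma}\geq l$ and Theorem~1 delivers consensus within the prescribed performance bounds under the control \eqref{eq:control}. The main obstacle is the simultaneous block-diagonalisation of $\Gamma$; once the common eigenbasis of $I_{n-1}$ and $\mathbf{1}\mathbf{1}^T$ is exploited, the problem collapses to two elementary scalar checks, and the rank-one structure of $D_i^TD_i=\mathbf{1}\mathbf{1}^T$ is what ultimately explains why the bound $l\leq 1$ cannot in general be relaxed when $n\geq 3$.
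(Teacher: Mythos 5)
Your proposal is correct and follows essentially the same route as the paper: exploit the star structure to get $D_i^TD_i=\mathbf{1}\mathbf{1}^T$ and $L_e=I_{n-1}+\mathbf{1}\mathbf{1}^T$, verify that $\Gamma\succeq 0$ holds at $\gamma=1$ so that $\bar{\gamma}\geq 1\geq l$, and invoke Theorem~1. The paper merely asserts this feasibility check, whereas you supply the explicit verification via simultaneous diagonalisation on $\mathbb{R}\mathbf{1}\oplus\mathbf{1}^{\perp}$, which is a welcome addition but not a different argument.
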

\begin{proof}
For a star graph defined as Definition 2 with the centering node $n$ as the only leader, the edge Laplacian $L_e$ and matrices $D_i^TD_i, D_f^TD_f$ have special structures. $D_i^TD_i$ has all elements equal to 1, while $D_f^TD_f=L_e-D_i^TD_i$ is an identity matrix.  $L_e$ has the elements given by $c_{ij}=2$ when $i=j$, and $c_{ij}=1$ otherwise. Under this special structure of star graphs and according to Theorem 1, it can be verified that \eqref{eq:condition} is always feasible with $\bar{\gamma}=1$, and from \eqref{eq:conditionstar}, we know the condition $\bar{\gamma}\geq l= \max\limits_{(i,j)\in \mathcal{E}}( l_{ij})$ holds. Finally, by applying Theorem 1, for a star graph, when the performance functions \eqref{eq:performancefunction} are chosen such that \eqref{eq:conditionstar} holds, then  we can conclude that the controlled system achieves consensus within  the  prescribed  performance  bounds when applying \eqref{eq:control}.
\end{proof}
    We conclude this section with the following observations. A sufficient condition for a general tree graph was derived in Theorem 1, under which the leader-follower multi-agent system \eqref{eq:nodedynamic} achieves consensus within the prescribed performance bounds \eqref{eq:performancefunction}. It can be seen that \eqref{eq:condition} may be infeasible when the decay rate of the performance functions is too large. This means that we need to constrain the decay rate of the performance functions in order to achieve consensus under prescribed performance guarantees within the leader-follower framework. This is reasonable since the followers only obey the first-order consensus protocol without any additional external input. And the decay rate constraint differs for different graph topologies, leader amount and leader positions. For the specific class of star graphs, we have proven that when the largest decay rate of the performance functions is less than or equal to 1, the closed loop system achieves consensus within the prescribed performance bounds by applying Theorem 1. We have also shown that the condition in Theorem 1 is a sufficient but not necessary condition by discussing the specific class of chain graphs. That is, for a chain graph with 2 or 3 followers, we can still achieve the result of consensus within performance bounds although the condition in Theorem 1 may be infeasible.

%   \begin{figure}[thpb]
%       \centering
%       %\includegraphics[scale=1.0]{figurefile}
%       \caption{Inductance of oscillation winding on amorphous
%       magnetic core versus DC bias magnetic field}
%       \label{figurelabel}
%   \end{figure}

%%%%%%%%%%%%%%%%%%%%%%%%%%%%%%%%%%%%%%%%%%%%%%%%%%%%%%%%%%%%%%%%%%%%%%%%%%%%%%%%
\section{SIMULATIONS}
In this section three simulation examples are presented in order to verify the
results of the previous sections. The communication graphs are shown as Fig. \ref{tree}, where the leaders and followers are represented by grey and white nodes, respectively. Regarding the prescribed performance functions, for all $(i,j)\in \mathcal{E}$, we choose $M_{ij}=1$ and
\begin{equation}\nonumber
     T_{ij}(\hat{x}_{ij})=\ln \left(-\frac{\hat{x}_{ij}+1}{\hat{x}_{ij}-1}\right). 
\end{equation}
The prescribed performance bounds are chosen as in \eqref{eq:funcpara} with different decay rate $l$ for different simulation examples. For each graph, choosing the same $\rho_{ij}$ for all edges is done without loss of generality. In addition, the prescribed performance bounds are depicted in black color for the following simulation graphs.
\begin{equation} \label{eq:funcpara}
\rho_{ij}(t)=4.9e^{-lt}+0.1.
\end{equation}
\begin{figure}[!h]
\centering
\includegraphics[width=1\columnwidth]{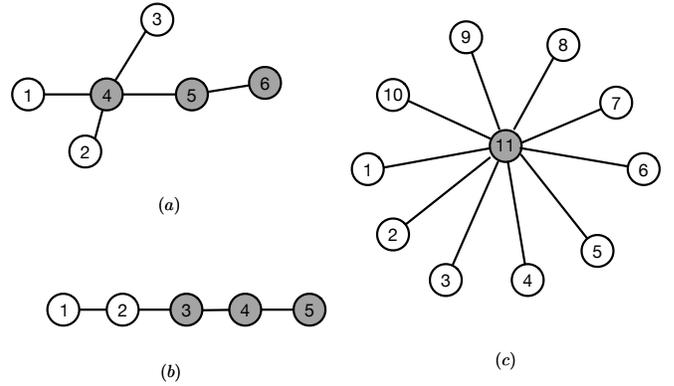}
\caption{Communication graphs with tree topologies.}
\label{tree}
\end{figure}

In Fig. \ref{tree}.(a), We first consider a tree graph with leaders set as $\mathcal{V}_L=\{4,5,6\}$, and the relative positions are initialised as $\begin{bmatrix}4.6& 4.9 &4.5& 4.7& 4.5\end{bmatrix}^T$. According to Theorem 1, the matrix inequality is feasible with $\bar{\gamma}=1$, hence it suffices that $l\leq \bar{\gamma}= 1$. The simulation result when applying the PPC law \eqref{eq:control} with a gain matrix $G$ whose diagonal entries are all equal to $1$ is shown on the right side of Fig. \ref{sim1}. As a comparison, the simulation result without PPC is shown on the left side of Fig. \ref{sim1}. We can see from Fig. \ref{sim1} that the trajectories intersect the performance bound without extra control, which can be improved by applying the PPC law \eqref{eq:control} such that the controlled system achieves consensus within the performance bound. Here the decay rate of the prescribed performance function is 1.

\begin{figure}[!h]
\centering
\includegraphics[width=1\columnwidth]{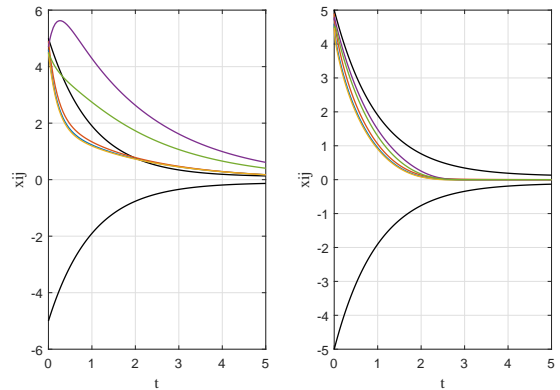}
\caption{The left figure shows the trajectories of relative positions without PPC, while the controlled system with PPC law is shown in the right figure under the communication graph as in Fig. \ref{tree}.(a).}
\label{sim1}
\end{figure}

In Fig. \ref{tree}.(b), we consider a chain graph with followers set as $\mathcal{V}_F=\{1,2\}$ and $\mathcal{V}_F=\{1,2,3\}$, the relative positions are initialised as $\begin{bmatrix}4.8& 3 &-2& 1\end{bmatrix}^T$. When the system has 2 followers, we know that the performance function can have a higher decay rate of 2, while the maximum decay rate is 1 when the system has one more follower (agent 3).
When $\mathcal{V}_F=\{1,2\}$, the simulation results are shown in Fig. \ref{sim2}, where the left figure shows the simulation result without additional control. Here the decay rate of the prescribed performance function is 2. We can see that the trajectories intersect the performance bound, which is improved as shown in the middle figure by applying the PPC law \eqref{eq:control} with gain matrix $G=diag(1,10,1,1)$, where $diag(a_1,a_2,\dots,a_n)$ represents the diagonal matrix with diagonal entries $a_1,a_2,\dots,a_n$ and $g_{32}=10$ is tuned for leader $\{3\}$ that connects the followers. However, it can be seen that the trajectories still intersect the performance bound. We then increase $g_{32}$ to $200$, and the simulation result is shown in the right figure. We can see that the controlled system achieves consensus within the performance bound. When $\mathcal{V}_F=\{1,2,3\}$, the simulation results are shown as in Fig. \ref{sim3}, in which the decay rate of the prescribed performance function is 1. Similarly, it can be seen in the left figure that the trajectories intersect the performance bound when there is no extra input, which is improved as shown in the middle and right figure by applying the PPC law \eqref{eq:control} with gain matrix $G=diag(1,1,10,1)$ and $G=diag(1,1,100,1)$, respectively. Here, the large gain is tuned for agent $4$ because it is the leader that connects the followers. We can also conclude that the controlled system achieves consensus within the performance bound.       

\begin{figure}[!h]
\centering
\includegraphics[width=1\columnwidth]{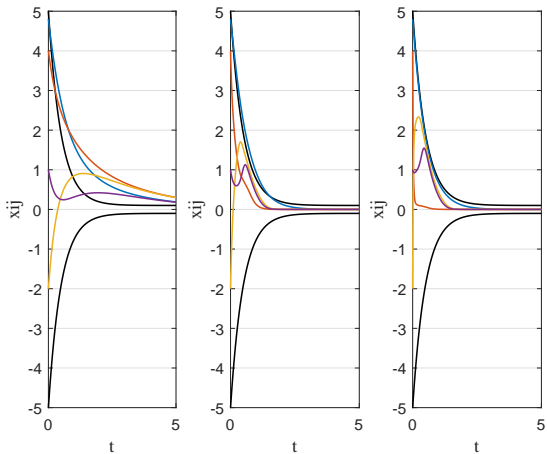}
\caption{The left figure shows the trajectories of relative positions without PPC, while the controlled system with PPC law but different gain matrix is shown in the middle and right figure, respectively under the communication graph as in Fig. \ref{tree}.(b) with $\mathcal{V}_F=\{1,2\}$.}
\label{sim2}
\end{figure}
\begin{figure}[!h]
\centering
\includegraphics[width=1\columnwidth]{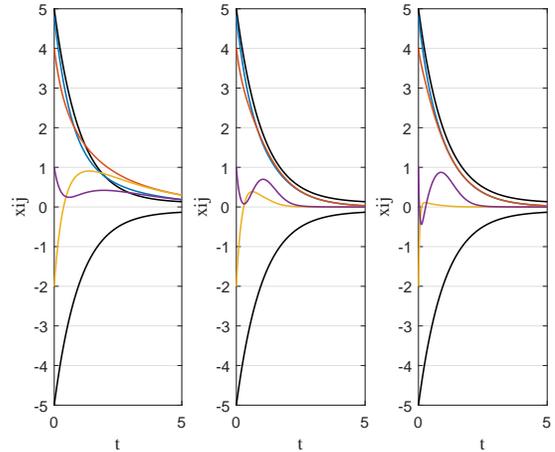}
\caption{The left figure shows the trajectories of relative positions without PPC, while the controlled system with PPC law but different gain matrix is shown in the middle and right figure, respectively under the communication graph as in Fig. \ref{tree}.(b) with $\mathcal{V}_F=\{1,2,3\}$.}
\label{sim3}
\end{figure}
In Fig. \ref{tree}.(c), We consider a star graph with only one leader as $\mathcal{V}_L=\{11\}$, and the relative positions are initialised as $\begin{bmatrix}4& 3 &-2& -3&4.9&1&4.7&-4&1&4.8\end{bmatrix}^T$. The simulation result when applying PPC law \eqref{eq:control} with a gain matrix $G$ whose diagonal entries are all equal to 1 is shown on the right side of Fig. \ref{sim4}. As a comparison, the simulation result without PPC is shown on the left side of Fig. \ref{sim4}. It is shown that the trajectories intersect the performance bound when there is no extra input, which can be improved by applying the PPC law \eqref{eq:control} such that the controlled system achieves consensus within the performance bound. Here the decay rate of the prescribed performance function is 1.

\begin{figure}[!h]
\centering
\includegraphics[width=1\columnwidth]{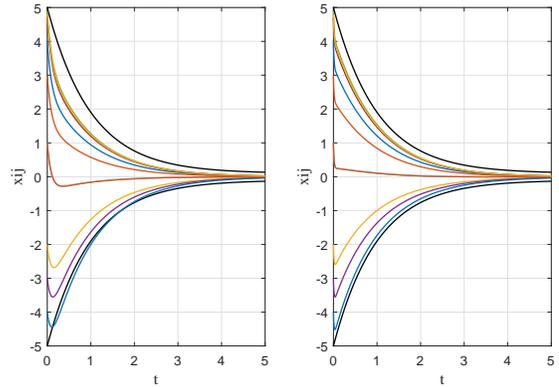}
\caption{The left figure shows the trajectories of relative positions without PPC, while the controlled system with PPC law is shown in the right figure under the communication graph as in Fig. \ref{tree}.(c).}
\label{sim4}
\end{figure}

%%%%%%%%%%%%%%%%%%%%%%%%%%%%%%%%%%%%%%%%%%%%%%%%%%%%%%%%%%%%%%%%%%%%%%%%%%%%%%%%

\section{CONCLUSIONS}
In this paper, we have studied consensus problems of leader-follower multi-agent systems with prescribed performance bounds. Under the assumption of tree graphs, a distributed prescribed performance control law has been proposed for a group of selected leaders in order to drive the followers such that the entire system can achieve consensus under the prescribed performance guarantees. We have proved that when the decay rate of the performance functions is within a sufficient bound, consensus together with performance guarantees can be obtained. In addition, two specific classes of chain and star graphs that  can  have  additional followers have been investigated.

Future research directions include considering more general graphs with circles, applying other transient approaches to this leader-follower framework and also investigating leader selection problems.

%%%%%%%%%%%%%%%%%%%%%%%%%%%%%%%%%%%%%%%%%%%%%%%%%%%%%%%%%%%%%%%%%%%%%%%%%%%%%%%%

%\bibliographystyle{abbrv}        % Include this if you use bibtex
%\bibliography{mybib}

% \begin{thebibliography}{99}

% \bibitem{c1}
% J.G.F. Francis, The QR Transformation I, {\it Comput. J.}, vol. 4, 1961, pp 265-271.

% \bibitem{c2}
% H. Kwakernaak and R. Sivan, {\it Modern Signals and Systems}, Prentice Hall, Englewood Cliffs, NJ; 1991.

% \bibitem{c3}
% D. Boley and R. Maier, "A Parallel QR Algorithm for the Non-Symmetric Eigenvalue Algorithm", {\it in Third SIAM Conference on Applied Linear Algebra}, Madison, WI, 1988, pp. A20.

% \end{thebibliography}

\end{document}